\documentclass[letterpaper, 10 pt, conference]{ieeeconf}
\IEEEoverridecommandlockouts
\usepackage{cite}
\usepackage{amsmath,amssymb,amsfonts}
\usepackage{algorithm} 
\usepackage{algpseudocodex}
\usepackage{graphicx}
\usepackage{textcomp}
\usepackage{xcolor}
\usepackage{url}
\usepackage{dblfloatfix}

\let\labelindent\relax

\def\BibTeX{{\rm B\kern-.05em{\sc i\kern-.025em b}\kern-.08em
    T\kern-.1667em\lower.7ex\hbox{E}\kern-.125emX}}
\usepackage{mathtools}

\usepackage{pgfplots}
\pgfplotsset{compat=newest}

\usepackage{enumerate}
\usepackage{enumitem}

\usepackage{acronym}
\acrodef{OLS}{ordinary least squares}
\acrodef{MLE}{maximum likelihood estimator}
\acrodef{SNR}{signal-to-noise ratio}
\acrodef{LTI}{linear time-invariant}
\acrodef{ARX}{autoregressive systems with exogenous inputs}
\acrodef{PE}{persistency of excitation}
\acrodef{i.i.d.}{independent, identically distributed}
\acrodef{QMI}{quadratic matrix inequality}
\acrodef{SME}{set-membership estimation}

\newtheorem{theorem}{Theorem}

\newtheorem{lemma}[theorem]{Lemma}
\newtheorem{remark}{Remark}
\newtheorem{assumption}{Assumption}
\newtheorem{prop}{Proposition}

\newcommand{\R}{\mathbb{R}}

\newcommand{\Prob}{\mathbb{P}}

\newcommand{\N}{\mathcal{N}}

\newcommand{\bigO}{\mathcal{O}}
\newcommand{\simiid}{\stackrel{\text{i.i.d.}}{\sim}}

\newcommand{\rev}[1]{\textcolor{black}{#1}}

\begin{document}
\title{Beyond Bounded Noise: Stochastic Set-Membership Estimation for Nonlinear Systems

\thanks{Funded by Deutsche Forschungsgemeinschaft (DFG, German Research Foundation) under Germany's Excellence Strategy - EXC 2075 – 390740016 and within grant AL 316/15-1 – 468094890. 
NC acknowledges the support of the Stuttgart Center for Simulation Science (SimTech).
FB acknowledges the support of the International Max Planck
Research School for Intelligent Systems (IMPRS-IS).}
\thanks{The authors are with the University of Stuttgart, Institute for Systems Theory and Automatic Control, 70550 Stuttgart,
Germany (e-mail:\{\mbox{felix.braendle}, \mbox{nicolas.chatzikiriakos}, \mbox{andrea.iannelli}, \mbox{frank.allgower}\}@ist.uni-stuttgart.de.}
}

\author{Felix Brändle, Nicolas Chatzikiriakos, Andrea Iannelli and Frank Allgöwer}

\maketitle

\begin{abstract}
In this paper, we derive a novel procedure for set-membership estimation of dynamical systems affected by stochastic noise with unbounded support.
Employing a bound on the sample covariance matrix, we are able to provide a finite-sample uncertainty set containing the true system parameters with high probability. 
Our approach can be natively applied to a wide class of nonlinear systems affected by sub-Gaussian noise.
Our analysis provides conditions under which the proposed uncertainty set converges to the true system parameters and establishes an upper bound on the convergence rate.
The proposed uncertainty set can be\;used directly for robust controller synthesis with probabilistic stability and performance guarantees. 
Concluding numerical examples demonstrate the advantages of the proposed formulation over established approaches.
\end{abstract}

\begin{keywords} Set-membership estimation,
Identification for control, statistical learning
\end{keywords}

\section{Introduction}
Many controller synthesis procedures require an accurate plant model. While such models can be derived from first principles, doing so often requires substantial expert knowledge.
On the other hand, system identification methods~\cite{ljung1999system} can be used to derive the model from data. 
For systems subject to stochastic noise with potentially unbounded support, early works employed correlation-based methods~\cite{care2017finite} to derive non-asymptotic uncertainty sets. More recently, advances in high-dimensional statistics~\cite{vershynin2012} have enabled the development of non-asymptotic guarantees for system identification~\cite{simchowitz2018learning, sarkar2019near,foster20a,Tsiamis2023}.
While these approaches do not naturally connect to robust control schemes, they have been leveraged for data-driven control, enabling end-to-end guarantees \cite{dean2020sample, chatzikiriakos2026a}. 
When bounds on the support of the noise affecting the system are available, \ac{SME} is often used since it provides an uncertainty description that naturally connects to robust control schemes. 
In essence, \ac{SME} directly yields an uncertainty set by considering all parameters consistent with the collected data~\cite{MILANESE2004957}.
In particular, the data-informativity framework as a subfield of \ac{SME} is attractive, as it focuses particularly on controlling and analyzing dynamical systems~\cite{waarde2022}. 
This framework provides a tight uncertainty formulation in the form of a \ac{QMI}~\cite{Waarde2023}.
\acp{QMI} can directly be used within robust control schemes to synthesize a guaranteed stabilizing controller for all systems in the uncertainty characterization \cite{dePersis2020, Waarde2023, berberich2020} and can easily be combined with other uncertainty descriptions \cite{braendle2025b}.
A question of particular interest is whether the uncertainty characterization obtained by \ac{SME} decreases as the number of samples increases and eventually converges to the true parameters.
This is particularly important for control applications since less uncertainty enables better controller performance. 
When the support of the noise is bounded, and the noise is i.i.d., \ac{SME} converges when the noise bound is tight~\cite{Yingying2024}. 
Notably, the convergence rate of \ac{SME} in this setting is faster than the fastest achievable convergence rate in settings with Gaussian noise \cite{Tsiamis2023}. 
The convergence rate of \ac{SME} for systems affected by stochastic noise with unbounded support is an open question.
Previous works tackling this setting either use high-probability bounds at each sampling instance~\cite{martin2023a} or a bound on the sample covariance of the noise~\cite{waarde2022}. 
However, the works above do not consider whether these methods converge to the true parameters and, if so, at what rate. 
\paragraph*{Contribution}
In this work, we provide a novel \ac{SME} procedure for unknown dynamical systems, \rev{which are linear in the unknown parameters and are affected by sub-Gaussian noise}. 
We leverage a bound on the sample covariance of the process noise to parameterize all possible parameter matrices consistent with the data.
Under standard \ac{PE} assumptions, we show asymptotic convergence of the identified parameter set to the true system matrices and provide a guaranteed bound on the convergence speed. 
To the best of our knowledge, this is the first work that provides these guarantees for \ac{SME} without requiring a bound on the support of the noise, but allowing for stochastic, unbounded noise.
Furthermore, our technique does not require evaluating the posterior distribution of a parameter matrix given the collected data, and therefore it can be easily applied to linear and nonlinear systems affected by sub-gaussian noise, \rev{as long as they are linear in the parameters}. 
Since we build on \ac{SME}, our approach can easily incorporate additional knowledge, e.g., on the system parameters, in the form of \acp{QMI}. 
Furthermore, \rev{as demonstrated in our numerical example,} the \ac{QMI} formulation can readily be used for robust controller design in a data-driven control scheme with probabilistic guarantees.
\paragraph*{Notation}
We denote matrix blocks that can be inferred from symmetry by
$\star$, i.e., we write $\Lambda^\top \Sigma \Lambda = [\star]^\top \Sigma \Lambda$. 
\rev{If $A\in\mathbb{R}^{n \times N}$ has full row rank, we use $A^\perp\in\mathbb{R}^{(N-n)\times N}$ to denote a matrix containing an orthonormal basis of the kernel of $A$ as rows, such that $A A^{\perp\top} = 0$ and $A^{\perp}A^{\perp\top}=I$}. 
Given a matrix $Z$ \rev{that has full row rank}, $Z^\dagger\rev{=Z^\top (ZZ^\top)^{-1}}$ denotes its Moore-Penrose inverse.
We write $f(N)=\mathcal{O}(g(N))$ if there exists
a constant $C>0$ such that $f(N)\le C g(N)$ for sufficiently large $N$.
For a matrix $A\in \R^{n\times n}$ we denote $A\preceq\mathcal{O}(g(N))$ if there exists a constant $C>0$ such that $A\preceq Cg(N)I_n$ where $I_n$ is the identity matrix of size $n$.
\section{Problem Setting}\label{sec:Setting}
We consider the unknown system 
\begin{equation}\label{eq:Sys}
    x_{t+1} = \theta_* z_t + w_t,
\end{equation}
where $x_t\in \R^{n_x}$ and $z_t = \phi(x_{t}, u_{t}) \in \R^{n_z}$ denote the state and lifted state at time $t$.  \rev{The lifting function $\phi$ is assumed to be known, and is allowed to be nonlinear.}
This leads to a linear regression model which includes LTI systems, but also systems with nonlinear liftings~\cite{mania2022active}.
The parameter matrix $\theta_*\in\mathbb{R}^{n_x \times n_z}$ is unknown and must be identified.
The input and stochastic process noise at time $t$ are denoted by $u_t\in \R^{n_u}$ and $w_t \in \R^{n_x}$. 
While the process noise is unknown, we impose the following assumption.
\begin{assumption}\label{ass:SubGausIsoNoise}
The process noise satisfies the following properties for all $t \ge 0$:
\begin{enumerate}[label=\roman*)]
    \item \rev{$w_t$ is drawn i.i.d. from a zero-mean  sub-gaussian distribution}, \label{add:SubGausIsoNoise:Subgaussian}
    \item The noise is isotropic, i.e., $\mathbb{E}[w_tw_t^\top] \rev{\coloneqq \Sigma_w} = I_{n_x}$. \label{add:SubGausIsoNoise:Isotopic}
\end{enumerate}
\end{assumption}
Sub-gaussian random variables can be used to model a broad class of stochastic noise, such as noise with bounded support and Gaussian noise.
Note that if the covariance matrix satisfies 
$\Sigma_w \succ 0$, 
Assumption\;\ref{ass:SubGausIsoNoise}\,\ref{add:SubGausIsoNoise:Isotopic} can be recovered by multiplying \eqref{eq:Sys} with $\Sigma_w^{-\text{\textonehalf}}$ from the left to satisfy Assumption\;\ref{ass:SubGausIsoNoise}.
\rev{To identify} the parameter matrix $\theta_*$ \rev{we have access to the} data $\mathcal{D}_N = \{\{z_t\}_{t=0}^N, \{x_t\}_{t=1}^{N+1}\}$.
Next, we introduce the following matrices
\begin{subequations}
\begin{align}
    X_N &= \begin{bmatrix}
        x_1 & \dots & x_{N}
    \end{bmatrix} \in \R^{n_x \times N}, \\
    Z_N &= \begin{bmatrix}
        z_0 & \dots & z_{N-1} 
    \end{bmatrix} \in \R^{n_z \times N}, \\
    W_N &=  \begin{bmatrix}
        w_0 & \dots & w_{N-1} 
    \end{bmatrix} \in \R^{n_x \times N},
\end{align}
\end{subequations}
where $X_N$ and $Z_N$ are from $\mathcal{D}_N$ and $W_N$ is the true, but unknown noise realization, which has generated the data.
This allows us to state \eqref{eq:Sys} in matrix notation
\begin{equation}
    X_N = \theta_* Z_N + W_N, \label{eq:Setup:MatrixNotation}
\end{equation}
which is more convenient for the upcoming analysis.
Since the data $\mathcal{D}_N$ only consists of finite samples affected by stochastic noise, it is generally not possible to recover $\theta_*$ exactly.
Hence, we aim to derive a set of parameter matrices, which must contain $\theta_*$ with a specified probability.
We employ a \ac{QMI} characterization
\begin{align} \label{eq:Setup:GeneralQMI}
    \begin{bmatrix}
        \theta^\top \\ I_{n_x} 
    \end{bmatrix}^\top  \Phi \begin{bmatrix}
        \theta^\top \\ I_{n_x} 
    \end{bmatrix} \succeq 0,
\end{align}
as it allows the direct application of robust control methods.
\rev{
In this work, we derive a \rev{matrix} $\Phi$ such that $\theta_*$ satisfies \eqref{eq:Setup:GeneralQMI} with high probability.
}
Taking the set defined by the \ac{QMI}~\eqref{eq:Setup:GeneralQMI} as uncertainty, it is possible to synthesize a controller, which is stabilizing for all possible parameter matrices $\theta$ in the set\cite{waarde2022,braendle2025a, chatzikiriakos2026a}.
Hence, the controller also stabilizes the true system with high probability, allowing us to provide probabilistic guarantees for the unknown system.
Furthermore, this framework allows us to analyze system properties such as passivity and to include performance objectives in the controller design, such as designing robust LQR-controllers \cite{Weiland1994}.
\rev{We demonstrate the applicability to controller design as part of our numerical evaluation in Section\,\ref{sec:numerics}.}

\section{Set-Membership estimation}
In this section, we derive a characterization of all parameter matrices consistent with a high probability set for the noise sample covariance.
Furthermore, we show under which conditions the set converges to the true parameters and provide a bound on the convergence rate.
\subsection{High-probability set-membership estimation}
To obtain a high-probability uncertainty set for the stochastic noise, recall the following result from high-dimensional statistics.
\begin{theorem}[Theorem 39, \cite{vershynin2012}] \label{theo:conv:WBound}
    Let $W^\top\in\mathbb{R}^{N\times n_x}$ be a matrix whose rows $w_i^\top$ are independent sub-gaussian isotropic random vectors with zero mean in $\mathbb{R}^{n_x}$. Then for every $N\geq 1$, with probability at least $1-\delta$ it holds that
    \begin{subequations}
    \begin{align}
        \sigma_{\max}(W^\top) \leq \sqrt{N} + c_1\sqrt{n_x} + \sqrt{\frac{1}{c_2} \log\left(\frac{2}{\delta}\right)},   \label{eq:conv:WSigmaMaxBound}\\
        \sigma_{\min}(W^\top) \geq \sqrt{N} - c_1\sqrt{n_x} - \sqrt{\frac{1}{c_2} \log\left(\frac{2}{\delta}\right)},   \label{eq:conv:WSigmaMinBound}
    \end{align}
    \end{subequations}
    where $c_1>0$ and $c_2>0$ are constants that depend only on the distribution of $w_i$ and $W=[w_1,\ldots,w_N]$.
    Furthermore, with the same probability, it holds that 
    \begin{align}
        \left\|\frac{1}{N} WW^\top - I_{n_x}  \right\| \leq \max(\eta, \eta^2), \label{eq:Conv:QuadraticBoundNoise}
    \end{align}
    where $\eta=c_1\sqrt{\frac{n_x}{N}} + \sqrt{\frac{\frac{1}{c_2} \log\left(\frac{2}{\delta}\right)}{N}}$.
\end{theorem}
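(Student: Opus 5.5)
This is Vershynin's Theorem 5.39. I would prove it by the standard net argument, applied to the quadratic form $\frac1N\|W^\top x\|^2$ for unit vectors $x$, and then translate the resulting operator-norm bound into the singular value bounds \eqref{eq:conv:WSigmaMaxBound}--\eqref{eq:conv:WSigmaMinBound}.

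\textbf{Reduction.} Set $A=W^\top$ and $t=\sqrt{\tfrac{1}{c_2}\log(2/\delta)}$. I first want to show that, with probability at least $1-2\exp(-c_2 t^2)=1-\delta$,
\begin{align*}
\left\|\tfrac1N A^\top A - I_{n_x}\right\|\le \max(\eta,\eta^2),\qquad \eta=c_1\sqrt{n_x/N}+t/\sqrt N .
\end{align*}
This is exactly \eqref{eq:Conv:QuadraticBoundNoise}, since $A^\top A = WW^\top$. The singular value bounds then follow from an elementary fact: if $|z^2-1|\le\max(\eta,\eta^2)$ for $z\ge0$, then $|z-1|\le\eta$. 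The proof is a case split on $z\ge1$ versus $z<1$, using $|z-1|\le|z^2-1|$ and $|z-1|^2\le |z^2-1|$. Apply this with $z=\|Ax\|/\sqrt N$ for every unit $x$. It gives $\sqrt N-\eta\sqrt N\le \sigma_{\min}(A)\le\sigma_{\max}(A)\le \sqrt N+\eta\sqrt N$, and $\eta\sqrt N=c_1\sqrt{n_x}+t$.

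\textbf{Net argument.} Let $\varepsilon=\max(\eta,\eta^2)$. Take a $1/4$-net $\mathcal N$ of the unit sphere $S^{n_x-1}$ with $|\mathcal N|\le 9^{n_x}$. For the symmetric matrix $B=\frac1N A^\top A-I$ one has $\|B\|\le 2\max_{x\in\mathcal N}|x^\top Bx|$, so it suffices to show $|\frac1N\|Ax\|^2-1|\le\varepsilon/2$ for every $x\in\mathcal N$.

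Fix $x$. Then $\|Ax\|^2=\sum_i\langle w_i,x\rangle^2$. The summands $\langle w_i,x\rangle^2-1$ are independent, mean zero (by isotropy), and sub-exponential, with norm bounded by a constant $K$ depending on the sub-gaussian norm. Bernstein's inequality gives
\begin{align*}
\Prob\left(\left|\tfrac1N\|Ax\|^2-1\right|\ge\tfrac\varepsilon2\right)\le 2\exp\left(-c\,\min(\varepsilon^2,\varepsilon)\,N/K^4\right).
\end{align*}
The key identity is $\min(\varepsilon^2,\varepsilon)=\eta^2$. Hence the exponent is at least $cN\eta^2/K^4\ge c(c_1^2 n_x+t^2)/K^4$.

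Choose $c_1$ large enough, depending only on $K$, so that $c\,c_1^2/K^4\ge\log 9+1$. A union bound over $9^{n_x}$ points then yields failure probability at most $2\exp(-c\,t^2/K^4)$. Setting $c_2=c/K^4$, this equals $\delta$.

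\textbf{Main obstacle.} The delicate step is the Bernstein bound together with the $\min(\varepsilon,\varepsilon^2)$ bookkeeping. Using $\max(\eta,\eta^2)$ instead of $\eta$ is what lets a single exponent $\eta^2$ cover both the sub-gaussian regime $\eta\le1$ and the sub-exponential regime $\eta>1$. One also has to check that centering $\langle w_i,x\rangle^2$ only changes the sub-exponential norm by an absolute constant factor.
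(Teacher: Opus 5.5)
Your proposal is correct and is essentially Vershynin's own proof of Theorem 5.39: the reduction via $\max(|z-1|,|z-1|^2)\le|z^2-1|$, a $1/4$-net of size $9^{n_x}$, Bernstein for the centered squares $\langle w_i,x\rangle^2-1$ with the identity $\min(\varepsilon,\varepsilon^2)=\eta^2$, and a union bound; the paper invokes this result by citation without reproducing a proof. The only bookkeeping to tidy is that you use $K$ for both the sub-exponential and the sub-gaussian norm: take $K$ to be the sub-gaussian norm, so the centered squares have $\psi_1$-norm at most a constant times $K^2$, and note that isotropy gives $K\gtrsim 1$, which is what justifies writing the Bernstein exponent as $c\min(\varepsilon^2,\varepsilon)N/K^4$.
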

Theorem\,\ref{theo:conv:WBound} provides a probabilistic bound on the sample covariance matrix of $w_t$ by bounding $\tfrac{1}{N}W_N W_N^\top$.
This is different from classical \ac{SME}, which employ a bound on the support of $w_t$ \cite{MILANESE2004957}.
The advantage of this approach is that it also holds for stochastic noise with unbounded support.
The following proposition provides a suitable high-probability set for the stochastic process noise and a corresponding high-probability uncertainty set for the parameters.
\begin{prop} \label{prop:SetW}
    Fix a failure probability $\delta\!\in\!(0,1)$.
    Let $W_N$ be generated according to Assumption\;\ref{ass:SubGausIsoNoise} and define the set 
    \begin{equation*}
    \mathcal{W}_N^{\kappa_\delta}\!\coloneq\!\left\{ W\in\mathbb{R}^{n_x \times N}\mid [\star]^\top \Phi_{W}(N,\kappa_\delta)
    \begin{bmatrix}
        W^\top \\ I_{n_x} 
    \end{bmatrix} \succeq 0\right\}
\end{equation*} 
with\; 
$\Phi_{W}(N,\kappa_\delta)\coloneqq\begin{bmatrix}
        -\frac{1}{N} I_{\rev{N}}  & 0 \\ 0 & \epsilon(N,\kappa_\delta)^2 I_{n_x} 
    \end{bmatrix}$
and $\epsilon(N,\kappa_\delta) \coloneqq 1\!+\!\frac{\kappa_\delta}{\sqrt{N}}$. 
Let $\kappa_\delta \ge c_1\sqrt{n_x} + \sqrt{\frac{1}{c_2} \log(\frac{2}{\delta})}$, where $c_1, c_2$ are the constants from Theorem\;\ref{theo:conv:WBound}.
\rev{Then, for all $N$ with probability at least $1-\delta$ it holds that
\begin{equation}
    W_N\in\mathcal{W}_N^{\kappa_\delta} \label{eq:Conv:PWn}   
\end{equation}
}
and consequently $\Prob\left[\theta_*\in\Theta_N^{\kappa_\delta}\right]\geq 1-\delta $ for
\begin{equation}
    \Theta_N^{\kappa_\delta} \coloneqq \left\{ \theta\in\mathbb{R}^{n_x \times n_z} \mid \exists W\!\in\!\mathcal{W}_N^{\kappa_\delta}\!:\: X_N \!= \!\theta Z_N\!+\!W\right\}\!. \label{eq:DefThetaN_delta}
\end{equation}
\end{prop}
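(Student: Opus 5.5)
The plan is to reduce membership $W_N\in\mathcal{W}_N^{\kappa_\delta}$ to a bound on the largest singular value of $W_N^\top$, and then invoke Theorem~\ref{theo:conv:WBound}.

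\textbf{Unfolding the QMI.} Write out the quadratic form defining $\mathcal{W}_N^{\kappa_\delta}$:
\begin{equation*}
[\star]^\top \Phi_W(N,\kappa_\delta)\begin{bmatrix} W^\top \\ I_{n_x}\end{bmatrix} = -\tfrac{1}{N} W W^\top + \epsilon(N,\kappa_\delta)^2 I_{n_x}.
\end{equation*}
So $W\in\mathcal{W}_N^{\kappa_\delta}$ if and only if $WW^\top \preceq N\epsilon(N,\kappa_\delta)^2 I_{n_x}$. This is equivalent to $\sigma_{\max}(W^\top)^2 \le N\bigl(1+\kappa_\delta/\sqrt{N}\bigr)^2 = (\sqrt{N}+\kappa_\delta)^2$. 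Since both sides are nonnegative, it is further equivalent to
\begin{equation*}
\sigma_{\max}(W^\top)\le \sqrt{N}+\kappa_\delta .
\end{equation*}

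\textbf{Applying Theorem~\ref{theo:conv:WBound}.} Under Assumption~\ref{ass:SubGausIsoNoise}, the columns $w_0,\dots,w_{N-1}$ of $W_N$ are i.i.d., zero-mean, sub-gaussian and isotropic. They are therefore also independent, so the rows of $W_N^\top$ satisfy the hypotheses of Theorem~\ref{theo:conv:WBound}. By \eqref{eq:conv:WSigmaMaxBound}, with probability at least $1-\delta$,
\begin{equation*}
\sigma_{\max}(W_N^\top)\le \sqrt{N}+c_1\sqrt{n_x}+\sqrt{\tfrac{1}{c_2}\log(\tfrac{2}{\delta})}\le \sqrt{N}+\kappa_\delta ,
\end{equation*}
where the last step uses the assumed lower bound on $\kappa_\delta$. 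Combined with the equivalence above, this gives \eqref{eq:Conv:PWn} for every fixed $N$.

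\textbf{Transferring to the parameters.} By \eqref{eq:Setup:MatrixNotation}, $X_N=\theta_* Z_N+W_N$ holds for the true noise realization. Hence on the event $\{W_N\in\mathcal{W}_N^{\kappa_\delta}\}$, the matrix $W=W_N$ witnesses $\theta_*\in\Theta_N^{\kappa_\delta}$. This gives the event inclusion $\{W_N\in\mathcal{W}_N^{\kappa_\delta}\}\subseteq\{\theta_*\in\Theta_N^{\kappa_\delta}\}$, and monotonicity of probability yields $\Prob[\theta_*\in\Theta_N^{\kappa_\delta}]\ge 1-\delta$.

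\textbf{Main obstacle.} The argument is largely bookkeeping. The only delicate point is checking that the hypotheses of Theorem~\ref{theo:conv:WBound} genuinely match the data-generating process. The $w_t$ are i.i.d. by assumption; the $z_t$ are not independent of past noise, but this does not matter because the statement concerns $W_N$ alone. One should also read ``for all $N$'' as ``for each fixed $N$'', since no union bound over $N$ is claimed. The remaining care is with the squaring step, which is valid because $\sqrt{N}+\kappa_\delta>0$.
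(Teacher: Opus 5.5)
Your proposal is correct and takes the same route as the paper. The paper also gets the equality case of $\kappa_\delta$ directly from the singular-value bound \eqref{eq:conv:WSigmaMaxBound} of Theorem~\ref{theo:conv:WBound} and treats larger $\kappa_\delta$ by monotonicity (which you fold into one inequality chain); your unfolding of the QMI to $W_NW_N^\top \preceq (\sqrt{N}+\kappa_\delta)^2 I_{n_x}$ and the witness $W=W_N$ for $\theta_*\in\Theta_N^{\kappa_\delta}$ just spell out steps the paper leaves implicit.
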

\begin{proof}
    The case where the condition on $\kappa_\delta$ holds with equality follows directly from \eqref{eq:conv:WSigmaMaxBound} in Theorem\;\ref{theo:conv:WBound}. 
    Observing that Proposition\;\ref{prop:SetW} still holds when we overestimate the noise covariance completes the proof. 
\end{proof}
\begin{remark}
The constants $c_1, c_2$ can be determined by following the proof of \cite[Theorem 39]{vershynin2012}.
For $w_t \simiid \N(0, I_{n_x} )$ it holds that $c_1=1$ and $c_2=\tfrac{1}{2}$~\cite[Corollary 35]{vershynin2012}.
Note that while the existence of such constants is well established, exact characterizations or tight bounds are often not considered.
However, empirical estimates of $c_1$, $c_2$ could be used whenever it is possible to generate samples of $w_t$.
\end{remark}
The set $\Theta_N^{\kappa_\delta}$ is key for the upcoming analysis, as it provides a high probability set for $\theta_*$.
Furthermore, this set is independent of the lifting $z_t = \phi(x_t,u_t)$, allowing us to cover both LTI and nonlinear systems in the same framework.
By inserting $W^\top=X^\top - Z^\top \theta^\top$ in $\mathcal{W}_N^{\kappa_\delta}$, as in~\cite{Waarde2023}, we can \rev{provide the exact data-based expression of} $\Theta_N^{\kappa_\delta}$  
\begin{equation}\label{eq:Theta_N_delta_QMI}
    \begin{aligned}
    \Theta_N^{\kappa_\delta} = \Big\{ \theta&\in\mathbb{R}^{n_x \times n_z} \mid \\
    &[ \star ]^\top \Phi_{W}(N,\kappa_\delta)
    \begin{bmatrix}
        -Z_N^\top & X_N^\top \\ 0 & I_{n_x}  
    \end{bmatrix}
    \begin{bmatrix}
        \theta^\top \\I_{n_x}  
    \end{bmatrix}
     \succeq 0\Big\},
\end{aligned}
\end{equation}
where the matrices $X_N$ and $Z_N$ are known from $\mathcal{D}_N$.
\rev{As desired, this yields a \ac{QMI} description strucurally identical to \eqref{eq:Setup:GeneralQMI}}.
\rev{The dimension of the \ac{QMI} itself is independent of $N$, but to determine $\Phi$, we must perform matrix multiplications of $X_N$ and $Z_N$, whose size depends on $N$.}

\subsection{Asymptotic convergence}\label{sec:convergence}
Next, we consider, whether $\Theta_N^{\kappa_\delta}$ converges to $\theta_*$ for $N\to\infty$.
As shown in \cite{braendle2025a}, in an adversarial setting we cannot guarantee convergence for $N\to\infty$.
For intuition, consider a disturbance of the form $W_N = \bar{\theta}Z_N$, such that $X_N = (\theta_* + \bar{\theta}) Z_N$. 
Using only $X_N$ and $Z_N$ it is not possible to distinguish between $\theta_*$ and $\bar{\theta}$.
This is in contrast to the case of stochastic noise with bounded support. 
If there is sufficient probability mass at the boundary and the noise is i.i.d., it has been shown that \ac{SME} recovers the true parameters asymptotically~\cite{Yingying2024}.
In this section, we investigate whether noise satisfying Assumption\;\ref{ass:SubGausIsoNoise}, which is i.i.d. but allows for noise with unbounded support, allows us to recover $\theta_*$ nonetheless.
To do so, we first note an important relationship between $\Theta_N^{\kappa_\delta}$ and the \ac{OLS} estimate $\hat \theta_N^\mathrm{OLS} = X_N Z_N^\dagger$.
As shown in Lemma\,\ref{theo:conv:ThetaN} (Appendix) and~\cite{braendle2025a}, the center of the \ac{QMI} in $\Theta_N^{\kappa_\delta}$ corresponds to the \ac{OLS} estimate $\hat \theta^\mathrm{OLS}_N$.
Hence, for the convergence analysis, we require the following standard \ac{PE} assumption \cite{Tsiamis2023, Yingying2024}, for the \ac{OLS} estimate $\hat \theta^\mathrm{OLS}_N$ to converge to the true system parameters, i.e.,  $\lim_{N\to \infty} \hat \theta^\mathrm{OLS}_N = \theta_*$. 
\begin{assumption}\label{ass:PE_quad}
Fix a failure probability $\delta\in (0,1)$. There \rev{exist constants} $c_3(\delta)>0$, $c_4(\delta)>0$ such that for all $N$
\rev{with probability at least $1-\delta$
\vspace{-10pt}
\begin{equation}\label{eq:Conv:PECondition}
    c_3(\delta) I_{n_z}\preceq \frac{1}{N}\sum_{t=0}^{N-1} z_t z_t^\top \preceq c_4(\delta) I_{n_z}. 
\end{equation}}
\vspace{-10pt}
\end{assumption}
Assumption\;\ref{ass:PE_quad} ensures that $Z_N$ has full row rank.
For \ac{LTI} systems, the lower bound in Assumption\;\ref{ass:PE_quad} can easily be satisfied by, e.g., selecting isotropic Gaussian excitations during data collection \cite{simchowitz2018learning} or by using experiment design algorithms \cite{wagenmaker2020active, chatzikiriakos2025convex}. 
For nonlinear systems, recent works provide algorithms that establish PE data with high probability~\cite{mania2022active}.
The upper bound in Assumption\;\ref{ass:PE_quad} is a common condition in the system identification literature \cite{simchowitz2018learning, Tsiamis2023}, establishes statistical consistency of the \ac{OLS}, and holds naturally in many settings (cf. Remark\;\ref{rem:ConvergenceOLS} and \cite{sarkar2019near}).
\rev{
For the convergence analysis, we require that \eqref{eq:Conv:PWn} and \eqref{eq:Conv:PECondition} hold. 
However, each condition only holds with probability at least $1-\delta$ for each $N$, such that $\Theta_N^{\kappa_\delta}$ may be empty.
To account for this, we define a strictly increasing random sequence $N_1:\mathbb{N}\to\mathbb{N}$ of all $N$ that satisfy\;\eqref{eq:Conv:PWn} and\;\eqref{eq:Conv:PECondition}.
Using a union bound argument and choosing $\delta < \tfrac{1}{2}$, with probability at least $1\!-\!2\delta$ for each $N\in\mathbb{N}$ there exists a $n\in\mathbb{N}$ with $N_1(n)\!=\!N$. Thus, $N_1(n)$ is an infinite sequence almost surely, and we can consider $n\to\infty$ instead of $N\to\infty$.
Similarly, we define a strictly increasing sequence $N_2\!:\!\mathbb{N}\!\to\!\mathbb{N}$ of all $N$, which satisfy \eqref{eq:Conv:PECondition}.}
\begin{theorem}\label{theo:convergence}
    Suppose the \ac{OLS} estimate $\hat \theta^\mathrm{OLS}_N$ converges to $\theta_*$
    \rev{
    and let Assumptions~\ref{ass:SubGausIsoNoise}\,\ref{add:SubGausIsoNoise:Subgaussian} and~\ref{ass:PE_quad} be satisfied. 
    Further, let 
    $\Sigma_w\coloneq\lim_{N\to\infty}\tfrac{1}{N}W_{N}W_{N}^\top$ 
    and 
    fix a failure probability $\delta<\tfrac{1}{2}$.
    If $\kappa_{\delta}$ is selected 
    according to Proposition\;\ref{prop:SetW} assuming $\Sigma_w = I_{n_x}$, then the following hold almost surely:
    \begin{enumerate}[label=\alph*)]
        \item If $\Sigma_w = I_{n_x} $, then $\lim_{n\to\infty} \Theta_{N_1(n)}^{\kappa_\delta}=\theta_*$. \label{item:Conv:TheoExactConvergence}\vspace{1pt}
        \item If $\Sigma_w \preceq I_{n_x}$, then $\theta_* \in \Theta_{N_1(n)}^{\kappa_\delta}$ for $n\to\infty$.
        \label{item:Conv:TheoOverEstimate}
        \item If $\Sigma_w \npreceq I_{n_x} $, then $\lim_{n\to\infty}\Theta_{N_2(n)}^{\kappa_\delta}=\emptyset$. \label{item:Conv:TheoUnderEstimate}
    \end{enumerate}
    }
\end{theorem}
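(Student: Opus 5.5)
We will rewrite the set $\Theta_N^{\kappa_\delta}$ explicitly as a matrix ellipsoid centred at the \ac{OLS} estimate; the paper points to Lemma~\ref{theo:conv:ThetaN} in the Appendix for this. Substituting $W^\top = X_N^\top - Z_N^\top\theta^\top$ into \eqref{eq:Theta_N_delta_QMI} and completing the square with $Z_NZ_N^\top$ (invertible on the sequences $N_1,N_2$ by Assumption~\ref{ass:PE_quad}) should give
\begin{equation*}
(\theta-\hat\theta^{\mathrm{OLS}}_N)\,\tfrac{1}{N}Z_NZ_N^\top\,(\theta-\hat\theta^{\mathrm{OLS}}_N)^\top \preceq \epsilon(N,\kappa_\delta)^2 I_{n_x} - \tfrac{1}{N}X_N(I_N-Z_N^\dagger Z_N)X_N^\top \eqqcolon R_N.
\end{equation*}
Since $Z_N(I-Z_N^\dagger Z_N)=0$, the projected term equals $\tfrac1N W_N(I-Z_N^\dagger Z_N)W_N^\top$. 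This in turn equals $\tfrac1N W_NW_N^\top - \tfrac1N W_NZ_N^\top(Z_NZ_N^\top)^{-1}Z_NW_N^\top$.

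The key step is to show that $R_N \to I_{n_x}-\Sigma_w$ almost surely along the relevant subsequences. First, $\epsilon(N,\kappa_\delta)^2\to 1$ because $\kappa_\delta$ is fixed. Second, $\tfrac1N W_NW_N^\top\to\Sigma_w$ by definition (a.s.\ by the strong law for i.i.d.\ noise). Third, the correction term tends to zero. To see this, note that $\hat\theta^{\mathrm{OLS}}_N-\theta_* = W_NZ_N^\dagger$, so the correction term can be written as
\begin{equation*}
(\hat\theta^{\mathrm{OLS}}_N-\theta_*)\,\tfrac1N Z_NZ_N^\top\,(\hat\theta^{\mathrm{OLS}}_N-\theta_*)^\top.
\end{equation*}
Using the upper bound $c_4(\delta)$ in \eqref{eq:Conv:PECondition} and the assumed consistency of \ac{OLS}, this vanishes. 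The lower bound $c_3(\delta)$ then gives, for every $\theta\in\Theta_N^{\kappa_\delta}$,
\begin{equation*}
\|\theta-\hat\theta^{\mathrm{OLS}}_N\|^2\le \|R_N\|/c_3(\delta).
\end{equation*}

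The three cases then follow.
\begin{itemize}
\item[a)] If $\Sigma_w=I_{n_x}$, then $R_N\to0$, so the diameter of $\Theta_{N_1(n)}^{\kappa_\delta}$ shrinks to zero around $\hat\theta^{\mathrm{OLS}}\to\theta_*$. Nonemptiness holds because $N_1$ only contains indices where \eqref{eq:Conv:PWn} holds, hence $\theta_*\in\Theta$ there. Therefore the sets converge to $\theta_*$.
\item[b)] Membership $\theta_*\in\Theta_{N_1(n)}^{\kappa_\delta}$ holds by construction of $N_1$ via Proposition~\ref{prop:SetW}; the underlying condition is $\tfrac1NW_NW_N^\top\preceq\epsilon^2I$. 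Alternatively, when $\Sigma_w\preceq I$ we have $\tfrac1NW_NW_N^\top\to\Sigma_w\preceq I$ while $\epsilon^2$ decreases to $1$ from above. That limiting argument is delicate in the boundary directions, so I would rely on the definition of $N_1$ together with Proposition~\ref{prop:SetW}, the latter being valid since only the upper bound \eqref{eq:conv:WSigmaMaxBound} is used.
\item[c)] If $\Sigma_w\npreceq I$, there is a unit vector $v$ with $v^\top(I-\Sigma_w)v<0$. Hence $v^\top R_Nv<0$ for all large $N$ in $N_2$. The left-hand side of the displayed inequality is positive semidefinite, so no $\theta$ can satisfy it, and $\Theta_{N_2(n)}^{\kappa_\delta}=\emptyset$ eventually.
\end{itemize}

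The main obstacle is the correction term: proving that $\tfrac1N W_NZ_N^\top(Z_NZ_N^\top)^{-1}Z_NW_N^\top\to0$ almost surely while $Z_N$ depends on past noise. I will avoid a martingale argument by expressing the term through the OLS error and using the hypothesis that \ac{OLS} converges. Some care is still needed with the $\delta$-dependent constants $c_3,c_4$: they hold only on the events defining $N_1,N_2$, which is exactly why the statement is formulated along those subsequences.
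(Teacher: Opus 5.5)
Your proposal is correct and follows essentially the paper's route: you rewrite $\Theta_N^{\kappa_\delta}$ (as in Lemma~\ref{theo:conv:ThetaN}) as an ellipsoid centred at $\hat\theta^{\mathrm{OLS}}_N$, show the right-hand side tends to $I_{n_x}-\Sigma_w$ along $N_1$ and $N_2$ using OLS consistency and the PE bounds, and read off the three cases exactly as the paper does. Your use of orthogonality to collapse the paper's residual $\tfrac1N(\Delta_\theta Z_N+W_N)(\Delta_\theta Z_N+W_N)^\top$ exactly to $\tfrac1N W_NW_N^\top-\Delta_\theta\tfrac1N Z_NZ_N^\top\Delta_\theta^\top$ is a slightly cleaner treatment of the cross terms, which the paper only handles implicitly.
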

\begin{proof}
First note that, by Lemma\;\ref{theo:conv:ThetaN} (Appendix) we can express $\Theta_{N}^{\kappa_\delta}$ by the following inequality
\vspace{-2pt}
    \begin{align*}
        \frac{1}{N}(\theta\!-\! {\hat{\theta}^\mathrm{OLS}_{N}})Z_NZ_N^\top &(\theta\!-\! {\hat{\theta}^\mathrm{OLS}_{N}})^\top \! \\ &\preceq \epsilon(N, \kappa_\delta)^2I_{n_x}  \!-\! \tfrac{1}{N} \theta_0 Z_N^\perp Z_N^{\perp \top} \theta_0^\top  
    \end{align*}
    with $\theta_0=X_N Z_N^{\perp \top}$\rev{ and $\hat{\theta}^\mathrm{OLS}_{N}=X_NZ_N^\dagger$.
    Since $\begin{bmatrix}  Z_N^\dagger & Z_N^{\perp\top} \end{bmatrix}$ is the inverse of $\begin{bmatrix} Z_N \\ Z_N^\perp        \end{bmatrix}$, we obtain
    \vspace{-5pt}
    \begin{align}\label{eq:conv:OLS_System}
        X_N=X_N\begin{bmatrix}
            Z_N^\dagger & Z_N^{\perp\top}
        \end{bmatrix}\begin{bmatrix}
            Z_N \\ Z_N^\perp
        \end{bmatrix}=\hat{\theta}^\mathrm{OLS}_{N} Z_N + \theta_0 Z^{\perp }_N.
    \end{align}
    Next, we define the difference between the ordinary least squares estimate and the true parameter matrix as $\Delta_\theta \coloneqq \theta_* - \hat{\theta}^\mathrm{OLS}_{N}$.
    By comparing \eqref{eq:Setup:MatrixNotation} and \eqref{eq:conv:OLS_System}, we arrive at $\theta_0 Z^{\perp}_N = \Delta_\theta Z_N + W_N$.
    Thus, any $\theta \in \Theta_N^{\kappa_\delta}$ satisfies
    \begin{align*}
        \tfrac{1}{N}(\theta\!&-\! {\hat{\theta}^\mathrm{OLS}_{N}})Z_NZ_N^\top (\theta\!-\! {\hat{\theta}^\mathrm{OLS}_{N}})^\top \! \\ &\preceq \epsilon(N, \kappa_\delta)^2I_{n_x}  \!-\! \tfrac{1}{N} (\Delta_\theta Z_N + W_N) (\Delta_\theta Z_N + W_N)^\top.
    \end{align*}
    We now consider the sequences $N_i(n)$, $i =1,2$. It holds
    \begin{equation}
    \begin{aligned} \label{eq:conv:TheoProof:LimitInequality}
    \lim_{n\to \infty} &\tfrac{1}{N_i(n)}(\theta\!-\! {\hat{\theta}^\mathrm{OLS}_{N_i(n)}})Z_{N_i(n)}Z_{N_i(n)}^\top (\theta\!-\! {\hat{\theta}^\mathrm{OLS}_{N_i(n)}})^\top \! \\ &- \epsilon(N_i(n), \kappa_\delta)^2I_{n_x} \! +\! \tfrac{1}{N_i(n)} W_{N_i(n)}W_{N_i(n)}^\top \preceq  0,
    \end{aligned}   
    \end{equation}
    where we used that the \ac{OLS} estimate converges, i.e., $\lim_{n\to \infty} \Delta_\theta = 0$ while along the sequences $N_i(n)$ the statement of Assumption~\ref{ass:PE_quad} holds. 
    Now, we consider
    \begin{equation} \label{eq:conv:TheoProof:LimitConvergingPart}
        \lim_{n\to\infty}\tfrac{1}{N_i(n)} W_{N_i(n)}W_{N_i(n)}^\top -\epsilon(N_i(n), \kappa_\delta)^2I_{n_x}.
    \end{equation}
    Result \ref{item:Conv:TheoExactConvergence} follows by leveraging $\lim_{n\to\infty}\epsilon(N_i(n), \kappa_\delta)^2=1$, and \eqref{eq:conv:TheoProof:LimitConvergingPart} being negative semi-definite for all $N_1(n)$, due to \eqref{eq:Conv:PWn}.
    From \eqref{eq:Conv:PECondition} follows that $\tfrac{1}{N_i(n)}Z_{N_i(n)}Z_{N_i(n)}^\top$ is positive definite, hence only $\theta_*$ can satisfy \eqref{eq:conv:TheoProof:LimitInequality}.
    Result \ref{item:Conv:TheoOverEstimate} follows the same steps, where $\eqref{eq:conv:TheoProof:LimitConvergingPart}$ does not approach zero for $\Sigma_w\neq I_{n_x}$. 
    Result \ref{item:Conv:TheoUnderEstimate} follows by realizing that \eqref{eq:conv:TheoProof:LimitConvergingPart} is not negative semi-definite, such that \eqref{eq:conv:TheoProof:LimitInequality} can not be satisfied by any $\theta$.
    }
\end{proof}
Theorem\;\ref{theo:convergence}\;\ref{item:Conv:TheoOverEstimate} and \ref{item:Conv:TheoUnderEstimate}  provide results for the cases where the true noise covariance is over- \rev{($\mathbb{E}[w_tw_t^\top] \preceq I_{n_x}$)} or underestimated \rev{($\mathbb{E}[w_tw_t^\top] \npreceq I_{n_x}$)}.  
If the variance of $w_t$ is overestimated, $\Theta^{\kappa_\delta}_N$ may not converge to a unique parameter matrix.
In contrast, if the true variance is larger than assumed, $\Theta^{\kappa_\delta}_N$ will lead to the empty set. 
\rev{In practical applications,} this provides additional insights into the validity of the assumed noise covariance. 
\begin{remark}\label{rem:ConvergenceOLS}
    Note that Theorem\;\ref{theo:convergence} requires convergence of the \ac{OLS} estimate. For many collection schemes and system classes, Assumption\;\ref{ass:PE_quad} is sufficient to ensure convergence of the \ac{OLS} estimate~\cite{Tsiamis2023, dean2020sample, chatzikiriakos2026a}. In fact, as shown by \cite{sarkar2019near}, when the data is collected from a trajectory of an unstable linear system, the \ac{OLS} is statistically inconsistent. In this case, the data does not satisfy Assumption~\ref{ass:PE_quad}.  
    While this inconsistency is not observed in the case of a finite hypothesis class \cite{chatzikiriakos2024b}, addressing this inconsistency for the estimation of unstable LTI systems from a single trajectory is an open problem. 
\end{remark}

\subsection{Convergence speed} \label{sec:convRate}
In the previous section, we showed convergence of $\Theta_N^{\kappa_\delta}$ for $\lim_{N\to\infty}\frac{1}{N}W_NW_N^\top=I_{n_x}$. 
Now, we shift our attention to the convergence rate.
To provide an upper bound, we require the following assumption. 
\begin{assumption}\label{ass:regularity}
    For a failure probability $\delta \in (0,1)$. There exists a constant
    $c_5(\delta)>0$, such that with probability at least $1-\delta$ the \ac{OLS} estimate $\hat\theta_N^\mathrm{OLS}$ satisfies
    \begin{equation}
        \|\hat{\theta}_N^\mathrm{OLS} - \theta_*\|^2 \leq \frac{c_5(\delta)}{N}.
    \end{equation} 
\end{assumption}
Note that Assumption\;\ref{ass:regularity} is the convergence rate obtained by \ac{OLS} for many different system classes and data collection schemes \cite{simchowitz2018learning, dean2020sample,Tsiamis2023, chatzikiriakos2026a, foster20a}.
Given Assumption\;\ref{ass:regularity}, we provide an upper bound on the convergence rate.
\begin{theorem}
    \label{prop:COnvergenceRate}
    Suppose Assumptions\,\ref{ass:SubGausIsoNoise}--\ref{ass:regularity} hold and fix $\delta\in(0,1)$. If $\kappa_\delta$ is chosen as in Proposition~\ref{prop:SetW}, then with probability at least $1 - 3\delta$ it holds that
    \begin{align} \label{eq:statementConvRate}
        \Theta^{\kappa_\delta}_N \subseteq \left\{\theta\in\mathbb{R}^{n_x \times n_z}\mid \|\theta - \hat{\theta}_N^\mathrm{OLS}\|^2 \leq \bigO \left(\tfrac{1}{\sqrt{N}}\right)\right\}.
    \end{align}
\end{theorem}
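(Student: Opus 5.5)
We intersect three events, each of probability at least $1-\delta$: (i) the noise bound \eqref{eq:Conv:QuadraticBoundNoise} of Theorem~\ref{theo:conv:WBound} together with \eqref{eq:Conv:PWn} from Proposition~\ref{prop:SetW}; (ii) the excitation condition \eqref{eq:Conv:PECondition} of Assumption~\ref{ass:PE_quad}; (iii) the OLS rate of Assumption~\ref{ass:regularity}. By a union bound all three hold simultaneously with probability at least $1-3\delta$, and the rest of the argument is deterministic on this event.

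As in the proof of Theorem~\ref{theo:convergence}, Lemma~\ref{theo:conv:ThetaN} lets us write every $\theta\in\Theta_N^{\kappa_\delta}$ as satisfying
\begin{align*}
\tfrac{1}{N}(\theta-\hat\theta_N^\mathrm{OLS})Z_NZ_N^\top(\theta-\hat\theta_N^\mathrm{OLS})^\top \preceq \epsilon(N,\kappa_\delta)^2 I_{n_x} - \tfrac{1}{N}(\Delta_\theta Z_N+W_N)(\Delta_\theta Z_N+W_N)^\top,
\end{align*}
where $\Delta_\theta=\theta_*-\hat\theta_N^\mathrm{OLS}$. The left-hand side is bounded below by $c_3(\delta)(\theta-\hat\theta_N^\mathrm{OLS})(\theta-\hat\theta_N^\mathrm{OLS})^\top$ using \eqref{eq:Conv:PECondition}. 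Taking the largest eigenvalue then gives
\begin{align*}
c_3\|\theta-\hat\theta_N^\mathrm{OLS}\|^2 \le \lambda_{\max}(\text{right-hand side}).
\end{align*}
It therefore suffices to show that the right-hand side is $\preceq \bigO(1/\sqrt N)$.

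To bound the right-hand side, expand the quadratic term as
\begin{align*}
\tfrac1N W_NW_N^\top + \tfrac1N\Delta_\theta Z_NZ_N^\top\Delta_\theta^\top + \tfrac1N(\Delta_\theta Z_NW_N^\top + W_NZ_N^\top\Delta_\theta^\top).
\end{align*}
The middle term is positive semidefinite and can simply be dropped, which only enlarges the bound. We then treat the remaining pieces separately.
\begin{itemize}
\item Since $\epsilon^2 = 1 + 2\kappa_\delta/\sqrt N + \kappa_\delta^2/N$, we have $\epsilon^2 I = I + \bigO(1/\sqrt N)$.
\item By \eqref{eq:Conv:QuadraticBoundNoise}, $\tfrac1N W_NW_N^\top \succeq (1-\max(\eta,\eta^2))I$. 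Here $\eta=\bigO(1/\sqrt N)$, so $\eta<1$ for large $N$ and $\max(\eta,\eta^2)=\eta$.
\item The cross term has norm at most $\tfrac2N\|\Delta_\theta\|\,\|Z_N\|\,\|W_N\|$. By \eqref{eq:Conv:PECondition}, $\|Z_N\|\le\sqrt{c_4N}$. By \eqref{eq:conv:WSigmaMaxBound}, $\|W_N\|\le\sqrt N+\kappa_\delta=\bigO(\sqrt N)$. By Assumption~\ref{ass:regularity}, $\|\Delta_\theta\|\le\sqrt{c_5/N}$. Together these give $\bigO(1/\sqrt N)$.
\end{itemize}
Collecting the pieces, the right-hand side is $\preceq (2\kappa_\delta/\sqrt N + \kappa_\delta^2/N + \eta + \bigO(1/\sqrt N))I = \bigO(1/\sqrt N)\,I$. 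Dividing by $c_3$ yields \eqref{eq:statementConvRate}.

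The main obstacle is the cross term $\Delta_\theta Z_N W_N^\top$. If we bounded it crudely by Cauchy--Schwarz in the opposite direction, or used $\Delta_\theta$ without a rate, it could be of order one. The plan relies on combining the $1/\sqrt N$ OLS rate with the $\sqrt N$ growth of both $\|Z_N\|$ and $\|W_N\|$. One also needs the same $\delta$-event to control both $\|W_N\|$ and the lower bound on $\tfrac1N W_NW_N^\top$; Theorem~\ref{theo:conv:WBound} provides both with one probability, which keeps the total count at $3\delta$. Finally, the $\bigO$ constants must be checked to depend only on $\delta,n_x,c_1,\dots,c_5$ and not on $N$.
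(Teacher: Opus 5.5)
Your proposal is correct and follows essentially the same route as the paper. The paper also uses three events combined by a union bound, the same inequality from Lemma~\ref{theo:conv:ThetaN} with the positive semidefinite term $\Delta_\theta Z_N Z_N^\top \Delta_\theta^\top$ dropped, $\bigO(1/\sqrt{N})$ bounds on $\epsilon(N,\kappa_\delta)^2 I_{n_x} - \tfrac{1}{N} W_N W_N^\top$ and on the cross terms, and the lower excitation bound to pass to $\|\theta-\hat{\theta}_N^\mathrm{OLS}\|^2$. The only difference, which is immaterial, is that you lower-bound $\tfrac{1}{N}W_NW_N^\top$ via \eqref{eq:Conv:QuadraticBoundNoise}, whereas the paper's event $\mathcal{E}_1$ uses the $\sigma_{\min}$ bound \eqref{eq:conv:WSigmaMinBound}.
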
\vspace{2pt}
\begin{proof}
    For this proof, we define the events 
    \begin{subequations}
        \begin{alignat*}{2}
            \mathcal{E}_1 &\coloneqq \!\left\{  1-\!\tfrac{\kappa_\delta}{\sqrt{N}} \!\le \!\tfrac{\sigma_\mathrm{min}(W^\top)}{\sqrt{N}} \!\le\! \tfrac{\sigma_\mathrm{max}(W^\top)}{\sqrt{N}} \!\le 1+\!\tfrac{\kappa_\delta}{\sqrt{N}} \right\}, \\
            \mathcal{E}_2 &\coloneqq \bigg\{c_3(\delta)I_{n_z} \preceq \tfrac{1}{N}\sum_{t=0}^{N-1} z_t z_t^\top \preceq c_4(\delta) I_{n_z}\bigg\}  , \\
            \mathcal{E}_3 &\coloneqq \bigg\{\|\hat{\theta}_N^\mathrm{OLS} - \theta_*\| \leq \tfrac{\sqrt{c_5(\delta)}}{\sqrt{N}}\bigg\}.
        \end{alignat*}
    \end{subequations} 
    Note that $\Prob[\mathcal{E}_1] \ge 1-\delta$, by Theorem\;\ref{theo:conv:WBound} since Assumption\;\ref{ass:SubGausIsoNoise} holds. 
    Further, since $\Prob[\mathcal{E}_2] \ge 1-\delta$ and $\Prob[\mathcal{E}_3] \ge 1-\delta$, by Assumptions\;\ref{ass:PE_quad} and~\ref{ass:regularity}, respectively we can use union bound arguments to obtain that $\Prob[\bigcap_{i =1}^3\mathcal{E}_i] \ge 1-3\delta$.
    For the remainder of this proof, we will show that when the events $\mathcal{E}_1$, $\mathcal{E}_2$ and $\mathcal{E}_3$ hold simultaneously, then \eqref{eq:statementConvRate} holds.
    Recall that from the proof of Theorem\;\ref{theo:convergence} we have 
     \begin{align*}
        &\frac{1}{N}(\theta\!-\! {\hat{\theta}^\mathrm{OLS}_{N}})Z_NZ_N^\top (\theta\!-\! {\hat{\theta}^\mathrm{OLS}_{N}})^\top \! \\ &\preceq \epsilon(N, \kappa_\delta)^2I_{n_x}  \!-\! \tfrac{1}{N} (\Delta_\theta Z_N + W_N) (\Delta_\theta Z_N + W_N)^\top \\
        &\preceq \epsilon(N, \kappa_\delta)^2I_{n_x} \!- \!\tfrac{1}{N}( W_N W_N^\top  \rev{+}  W_N Z_N^\top\Delta_\theta^\top \rev{+} \Delta_\theta Z_NW_N^\top),
    \end{align*}
    where in the last step we used the fact that $\Delta_\theta Z_N Z_N^\top\Delta_\theta^\top$ is positive semi-definite. 
    Now note that, under $\mathcal{E}_1$ we have 
    \begin{align*}
        \epsilon&(N, \kappa_\delta)^2I_{n_x}  -\tfrac{1}{N}W_NW_N^\top \\ &\preceq \left(\left( 1 + \tfrac{\kappa_\delta}{\sqrt{N}}\right)^2 - \left( 1 - \tfrac{\kappa_\delta}{\sqrt{N}}\right)^2 \right) I_{n_x} \preceq \bigO\left(\tfrac{1}{\sqrt{N}}\right).
    \end{align*}
    Further, under the event $\mathcal{E}_1$, the upper bound in $\mathcal{E}_2$, and $\mathcal{E}_3$ we can bound the individual terms to obtain 
    \begin{align*}
        \rev{-}\tfrac{1}{N}( W_N Z_N^\top\Delta_\theta^\top \rev{+} \Delta_\theta Z_NW_N^\top)\preceq \bigO\left(\tfrac{1}{\sqrt{N}}\right).
    \end{align*}
    Putting everything together, we have 
    \begin{align*}
        \tfrac{1}{N}(\theta\!-\! {\hat{\theta}^\mathrm{OLS}_{N}})Z_NZ_N^\top &(\theta\!-\! {\hat{\theta}^\mathrm{OLS}_{N}})^\top \preceq \bigO\left(\tfrac{1}{\sqrt{N}}\right).
    \end{align*}
    From the lower bound in $\mathcal{E}_2$ it follows
    \begin{align*}
        (\theta\!-\! {\hat{\theta}^\mathrm{OLS}_{N}})(\theta\!-\! {\hat{\theta}^\mathrm{OLS}_{N}})^{\!\top} \!\!\!\preceq\! \tfrac{1}{c_3(\delta)N}(\theta\!-\! {\hat{\theta}^\mathrm{OLS}_{N}})Z_NZ_N^\top(\theta\!-\! {\hat{\theta}^\mathrm{OLS}_{N}})^{\!\top} \!.
    \end{align*}  
    By combining both lower and upper bounds, we obtain that 
        
        \begin{equation}
            \theta\in\Theta_N^{\kappa_\delta} \implies (\theta\!-\! {\hat{\theta}^\mathrm{OLS}_{N}})(\theta\!-\! {\hat{\theta}^\mathrm{OLS}_{N}})^\top \preceq \bigO\left(\tfrac{1}{\sqrt{N}}\right).
        \end{equation}
\end{proof} 
While Theorem\;\ref{theo:convergence} showed convergence of the proposed \ac{SME} approach to the true parameter matrix,
Theorem\,\ref{prop:COnvergenceRate} provides a bound on the convergence rate by showing that the size of $\Theta_N^{\kappa_\delta}$ decreases with at least $\bigO(\frac{1}{\sqrt{N}})$. 
Note that this is notably slower than the \ac{OLS} rate itself for many known settings, which converges with $\bigO(\frac{1}{N})$, as discussed in Assumption\;\ref{ass:regularity}. 
This leads to a timescale separation in the different convergence rates.
First, the \ac{OLS}-estimate $\hat{\theta}^\mathrm{OLS}_N$ converges to $\theta_*$, and then $\Theta_N^{\kappa_\delta}$, which is centered around $\hat{\theta}^\mathrm{OLS}_N$ converges to a singleton. 
While the rate of convergence of our approach is slower than other approaches, which are directly applicable for controller design of linear systems, such as \cite{umenberger2019, Waarde2023}, our approach offers different advantages (cf. our numerical evaluations in Section\;\ref{sec:numerics}).
For one, our formulation is easily applicable to nonlinear systems, as we require only the bound on the noise matrix $W_N$, which is indepedent on the nonlinear lifting $\phi(x_t,u_t)$.
Furthermore, Theorem\;\ref{theo:convergence}\,\ref{item:Conv:TheoOverEstimate} and \ref{item:Conv:TheoUnderEstimate} provide a built-in hypothesis test. 
Over- or underestimating the sample covariance leads to a qualitative change in $\Theta_N^{\kappa_\delta}$, as it either becomes empty or does not converge to a singleton.
To the best of our knowledge, this is the first result that combines \ac{SME} with unbounded, stochastic noise, while demonstrating convergence and providing a bound on the convergence rate.
Further improvements may be possible by including the lower bound \eqref{eq:conv:WSigmaMaxBound} in the construction of $\mathcal{W}_N^{\kappa_\delta}$.
\section{Numerical evaluation} \label{sec:numerics}
\rev{To validate and contextualize our theoretical results, we provide two numerical experiments.}\footnote{The Matlab code for the numerical examples can
be accessed at: \texttt{https://github.com/col-tasas/2026-stochastic-SME}}
\rev{First, we consider a data-driven control setting with an LTI system.
Second, we investigate the convergence properties of the proposed \ac{SME} method for a nonlinear example.}

\emph{Controller synthesis:}
\rev{
Our framework readily integrates with data-driven control, as it is based on a \ac{QMI} parameterization that can be directly incorporated into existing robust controller design schemes. Moreover, the proposed method detects violations of the assumed noise covariance. In contrast, \cite{dePersis2020,berberich2020,Waarde2023,umenberger2019} can yield overly small uncertainty regions and consequently invalid end-to-end guarantees without detecting such violations. Finally, unlike \cite{Waarde2023, umenberger2019}, our method also applies to nonlinear systems that are linear in the parameters.
}
\rev{
To illustrate these findings, consider a second-order linear oscillator
\begin{align*}
    x_{t+1} = \begin{bmatrix}
    1&1\\-0.1&0.8
    \end{bmatrix} x_t + 
    \begin{bmatrix}
        0\\1
    \end{bmatrix}u_t + w_t, \quad  w_t \simiid \mathcal{N}(0,\sigma^2I_2 )
\end{align*}
with noise covariance $\sigma\in\{1;1.1\}$. 
We consider all parameters of the system unknown and consequently define $z_t = \begin{bmatrix}
    x_t^\top  & u_t
\end{bmatrix}^\top$. 
To generate data, we fix the initial condition $x_0=0$ and use a fixed input sequence sampled according to $u_t\simiid \mathcal{N}(0,\sigma_u^2)$ with $\sigma_u=5$ to generate a single trajectory.} 
\rev{
We compare our approach to two existing approaches.
}
First, consider the approach presented in \cite{dePersis2020,berberich2020}. 
This approach also uses \ac{SME} techniques for controller synthesis with $\mathcal{W}_N^{\kappa_\delta}$, but neglects the additional constraint that any $W_N\in\mathcal{W}_N^{\kappa_\delta}$ must also satisfy $X_N = \theta Z_N + W_N$ by using $W_NZ_N^\dagger$ as disturbance (see \cite{braendle2025a} for more details). 
The resulting set is defined as
\begin{align*}
    \Theta^\delta_{1}\!=\!\left\{\theta \mid \!\tfrac{1}{N}(\theta - \hat{\theta}_N^{\mathrm{OLS}})^\top (\theta - \hat{\theta}_N^{\mathrm{OLS}}) \!\preceq\!  \epsilon(N, \kappa_\delta)^2 Z_N^{\dagger\top}\!Z_N^\dagger\right\}
\end{align*}
\rev{where $\epsilon(N, \kappa_\delta)$ is defined in Proposition\;\ref{prop:SetW}.}
To get a \ac{QMI} in $\theta^\top$ instead of $\theta$, we apply the dualization lemma \cite[Lemma 4.9]{Weiland1994}.
Secondly, we consider the scheme proposed in \cite[Proposition 2.1]{umenberger2019} and used in \cite[Section 5.4]{Waarde2023}, which also provides a \ac{QMI} description according to
\begin{align*}
    \Theta^\delta_{2}=\left\{\theta \mid (\theta - \hat{\theta}_N^{\mathrm{OLS}}) Z_NZ_N^\top(\theta - \hat{\theta}_N^{\mathrm{OLS}})^\top \preceq c_\delta I_{n_x} \right\},
\end{align*}
where $c_\delta$ is the quantile function for the failure probability $\delta$ with respect to the Chi-squared distribution.
Note that this approach is only applicable to linear dynamics.
\rev{For our analysis, we fix the failure probability $\delta=0.05$ and perform a Monte Carlo simulation by constructing $\Theta_N^{\kappa_\delta}$ $100$ times for each $\sigma\in\{1;1.1\}$.
The sets $\Theta_N^{\kappa_\delta}$, $\Theta^\delta_{1}$, and $\Theta^\delta_{2}$ all assume isotropic noise ($\sigma\!=\!1$), so we can investigate the effect of the assumed noise covariance bound being wrong.
Given the uncertainty sets, we employ \cite[Theorem 17]{waarde2022} to synthesize a data-driven $\mathcal{H}_2$-statefeedback controller $u_t = K_N x_t$ using MOSEK \cite{mosek} for different $N$.
The controller minimizes the worst-case $\mathcal{H}_2$-norm from $w_t$ to the output $[x_t,u_t]^\top$.
For more details regarding robust controller synthesis, see \cite{Weiland1994}.
We compare it to the nominal $\mathcal{H}_2$-statefeedback controller of the true system matrices with controller gain $K_\mathrm{*}$. 
The difference $\|K_N-K_\mathrm{*} \|$ for different $N$ and $\sigma$ are visualized in Fig.\,\ref{fig:numerical:UncertaintyVolume}.}
\begin{figure}[t]
    \centering
    \includegraphics[width = 0.92\linewidth]{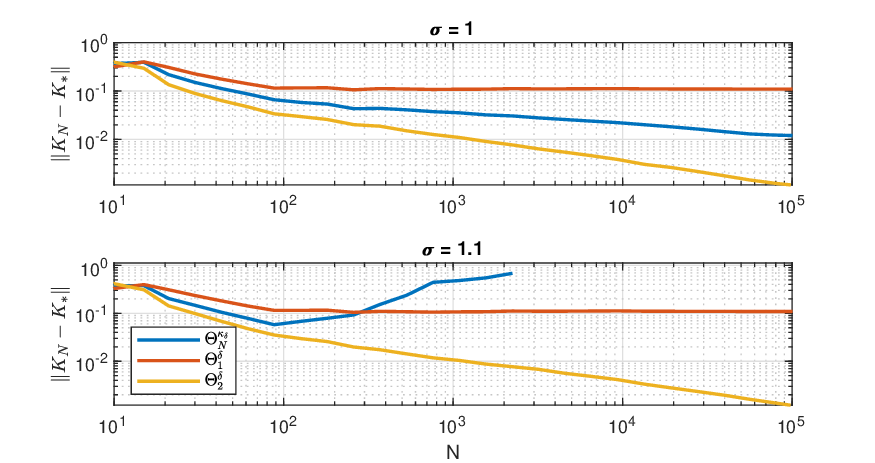}
     \vspace{-9pt}
    \caption{Difference between nominal and data-driven controllers.}
    \label{fig:numerical:UncertaintyVolume}
    \vspace{-18pt}
\end{figure}
\rev{For $\sigma=1$, we see that our approach and $\Theta^\delta_{2}$ converge to the true optimal controller matrix $K_\mathrm{*}$, while for $\Theta^\delta_{1}$ the controller gain $K$ does not converge to $K_\mathrm{*}$.
We notice that $\Theta^\delta_{2}$ converges at a faster rate than our approach, as it uses the bound of the \ac{OLS}-estimate for linear systems, similar to Assumption\;\ref{ass:regularity}. 
As expected, when $\sigma=1.1$, i.e., $\sigma$ is larger than the assumed covariance $\sigma = 1$, our method detects this violation of the noise assumption and yields the empty set. For the approaches $\Theta^\delta_{1}$ and $\Theta^\delta_{2}$, $K_N$ shows the same behavior as for the case $\sigma=1$.
This is problematic, since for finite $N$ the obtained uncertainty sets are no longer guaranteed to contain the true system with the specified probability, since the noise assumptions are violated.
In this example, when $\Theta^{\kappa_\delta}_N$ becomes empty, we empirically have $\Prob\left[\theta_*\in\Theta^\delta_{2}\right]=0.93$ instead of the specified $0.95$.
Hence, the robust controller synthesis methods will not provide the desired performance and stability guarantees.
Although our method also does not provide a controller with the specified stability guarantees, it is possible to detect this case as $\Theta_N^{\kappa_\delta}$ becomes empty, such that the assumed covariance can be adjusted accordingly.}

\rev{\emph{Nonlinear system:}
Next, we apply our parameterization to a nonlinear system
\begin{equation*}
    x_{t+1} = 0.9 x_t - 1.5\sin(x_t) + 0.5 u_t + w_t
\end{equation*}
with state $x_t$, noise $w_t \simiid \mathcal{N}(0,\sigma^2)$, nonlinear lifting $\phi(x_t,u_t)=[x_t,\sin(x_t),u_t]^\top$, unknown parameter vector $\theta_*=[0.9,\,-1.5,\,0.5]$ and $\sigma\in\{0.09;0.1;0.11\}$.
As before, we fix the initial condition $x_0=0$ and excite the system with $u_t\simiid \mathcal{N}(0,\sigma_u^2)$ with $\sigma_u=0.5$, fix the failure probability $\delta=0.05$.
and repeat the experiment $200$ times for each $\sigma$.
We treat $\sigma = 0.1$ as the known covariance and as described in Section\,\ref{sec:Setting}, we scale $X_N$ and $Z_N$ by a factor $0.1$ to recover the isotropic case for the nominal covariance $0.1^2$.
We compare our approach only to $\Theta^\delta_{1}$, since $\Theta^\delta_{2}$ can only be used for linear systems.
To this end, we compute the volume of the ellipses describing each set and evaluate the mean.
The results are shown in Fig.\,\ref{fig:numerical:UncertaintyVolumeNonlinear}.
\begin{figure}[t]
    \centering
    \includegraphics[width = 0.92\linewidth]{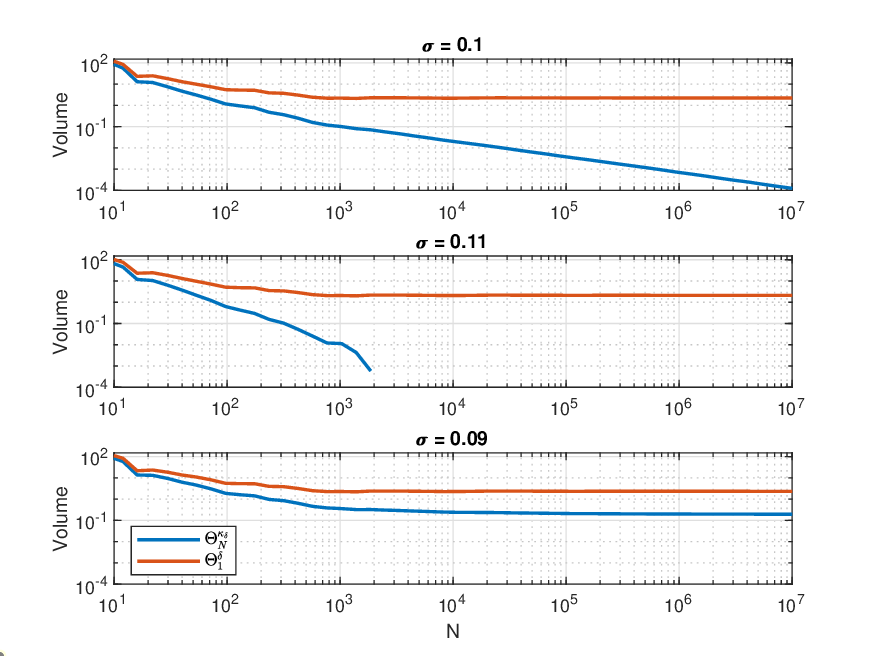}
     \vspace{-14pt}
    \caption{Mean volume of uncertainty sets  for the nonlinear system.}
    \label{fig:numerical:UncertaintyVolumeNonlinear}    
    \vspace{-14pt}
\end{figure}
Note that, even though we consider a nonlinear system, for $\sigma = 0.1$ and $\sigma = 0.11$ we observe the same trends as in the previous example.
For $\sigma = 0.09$, both $\Theta_N^{\kappa_\delta}$ and $\Theta^\delta_{1}$ can not recover $\theta_*$, (cf. Theorem~\ref{theo:convergence}~\ref{item:Conv:TheoOverEstimate}).
When investigating $N\in[10^{3},10^{7}]$ for $\sigma =1$, the volume decreases by about three decades, indicating a convergence rate of $\bigO(N^{-\frac34})$.
As the volume of a three-dimensional ball grows cubically with its radius, this is a numerical indication of the tightness of our upper bound on the convergence rate.}
\section{Conclusion}
In this work, we proposed a parameterization to include unbounded noise into the \ac{SME} framework. 
To do so, we employ a general bound on the sample covariance and show convergence under standard assumptions on persistency of excitation. 
Our approach can be applied to nonlinear systems and provides a qualitative way to validate whether the assumptions on the noise are justified. \rev{Future work includes using this mechanism to co-identify the unknown parameters and noise statistics.}
In addition, investigating whether the convergence rate can be improved, e.g., by including the lower bound~\eqref{eq:conv:WSigmaMinBound}, \rev{as well as leveraging targeted exploration to shape the set $\Theta_N^{\kappa_\delta}$ present promising research directions.}

\appendix 
\begin{lemma}\label{theo:conv:ThetaN}
    Let Assumption\;\ref{ass:PE_quad} be satisfied. Then the set $\Theta^{\kappa_\delta}_N$ defined in Proposition\,\ref{prop:SetW} can exactly be described by 
    \begin{align*}
        \Theta_N^{\kappa_\delta} = \left\{ \theta\in\mathbb{R}^{n_x \times n_z} \mid [ \star ]^\top \Phi_{\Theta}(N)\begin{bmatrix}
        (\theta- {\hat{\theta}^\mathrm{OLS}_{N}})^\top \\ I_{n_x} 
    \end{bmatrix}\succeq 0\right\}
    \end{align*}
    with 
    \begin{equation*}
        \Phi_{\Theta}(N) = \begin{bmatrix}
            -\frac{1}{N}Z_NZ_N^\top & 0 \\
            0 & \epsilon(N,\kappa_\delta)^2I_{n_x}  - \frac{1}{N} \theta_0 Z_N^\perp Z_N^{\perp \top} \theta_0^\top
        \end{bmatrix},
    \end{equation*}
    \rev{$\theta_0=X_NZ_N^{\perp \top}$} and $\hat\theta^\mathrm{OLS}_{N}=X_N Z_N^\dagger$ being the \ac{OLS} estimate.
\end{lemma}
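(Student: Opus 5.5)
Starting from the data-based QMI \eqref{eq:Theta_N_delta_QMI}, I will expand the quadratic form and then use the orthogonal decomposition of $\R^N$ induced by $Z_N$. By Assumption~\ref{ass:PE_quad}, $Z_N$ has full row rank, so $Z_N^\dagger=Z_N^\top(Z_NZ_N^\top)^{-1}$ and $Z_N^\perp$ are well defined. Moreover,
\[
T\coloneqq\begin{bmatrix} Z_N^\dagger & Z_N^{\perp\top}\end{bmatrix}
\]
satisfies $T^{-1}=\begin{bmatrix} Z_N \\ Z_N^\perp\end{bmatrix}$, since $Z_NZ_N^\dagger=I$, $Z_NZ_N^{\perp\top}=0$, $Z_N^\perp Z_N^\dagger=0$ and $Z_N^\perp Z_N^{\perp\top}=I$. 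Consequently
\[
I_N = Z_N^\dagger Z_N + Z_N^{\perp\top}Z_N^\perp ,
\]
and the two projectors $P=Z_N^\dagger Z_N$ and $I-P$ are orthogonal.

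\textbf{Step 1.} Write the condition explicitly. With $W^\top=X_N^\top-Z_N^\top\theta^\top$, membership in $\Theta_N^{\kappa_\delta}$ is equivalent to
\[
\tfrac1N (X_N-\theta Z_N)(X_N-\theta Z_N)^\top \preceq \epsilon(N,\kappa_\delta)^2 I_{n_x}.
\]
Equivalence in both directions holds because $W$ is uniquely determined by $\theta$ through $X_N=\theta Z_N+W$.

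\textbf{Step 2.} Decompose $X_N$ as in \eqref{eq:conv:OLS_System}, i.e. $X_N=X_NTT^{-1}=\hat\theta^\mathrm{OLS}_N Z_N+\theta_0 Z_N^\perp$. This gives
\[
X_N-\theta Z_N = -(\theta-\hat\theta^\mathrm{OLS}_N)Z_N+\theta_0 Z_N^\perp .
\]
Expanding the product $(X_N-\theta Z_N)(X_N-\theta Z_N)^\top$, the cross terms contain $Z_NZ_N^{\perp\top}=0$ and therefore vanish. What remains is
\[
(\theta-\hat\theta^\mathrm{OLS}_N)Z_NZ_N^\top(\theta-\hat\theta^\mathrm{OLS}_N)^\top+\theta_0Z_N^\perp Z_N^{\perp\top}\theta_0^\top .
\]

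\textbf{Step 3.} Rearrange the inequality from Step 1 with this expansion. It becomes
\[
-\tfrac1N(\theta-\hat\theta^\mathrm{OLS}_N)Z_NZ_N^\top(\theta-\hat\theta^\mathrm{OLS}_N)^\top+\epsilon^2 I_{n_x}-\tfrac1N\theta_0Z_N^\perp Z_N^{\perp\top}\theta_0^\top\succeq0 .
\]
This is exactly $[\star]^\top\Phi_\Theta(N)\begin{bmatrix}(\theta-\hat\theta^\mathrm{OLS}_N)^\top\\ I\end{bmatrix}\succeq0$, with $\Phi_\Theta(N)$ block diagonal as stated. Since every step is an equivalence, the two descriptions define the same set.

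The only delicate point is ensuring the cross terms vanish and that $T$ is genuinely invertible. Both rest on the full row rank of $Z_N$ guaranteed by Assumption~\ref{ass:PE_quad} and on the defining properties of $Z_N^\perp$ from the notation section. Everything else is direct algebra.
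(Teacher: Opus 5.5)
Your proof is correct and follows essentially the paper's argument. The paper splits $W_N$ along the rows of $Z_N$ and $Z_N^\perp$ and identifies the coefficients as $\hat\theta^{\mathrm{OLS}}_N-\theta$ and $\theta_0$ via $[Z_N^\dagger\; Z_N^{\perp\top}]$; this is the same as your substitution $X_N=\hat\theta^{\mathrm{OLS}}_N Z_N+\theta_0 Z_N^\perp$ into $W=X_N-\theta Z_N$, followed by the same expansion in which the cross terms vanish because $Z_NZ_N^{\perp\top}=0$.
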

\begin{proof}
    This follows similar steps as in \cite[Corollary 1]{braendle2025a}. First, we take any $\theta\in\Theta_N^{\kappa_\delta}$ and use the decomposition 
    \rev{$W_N = \theta_0 Z_N^\perp + \theta_1 Z_N$ to obtain $X_N = (\theta+\theta_1)Z_N + \theta_0Z_N^\perp.$}
    Now by multiplication with $[Z_N^\dagger\,Z_N^{\perp\top}]$ and due to $Z_NZ_N^\dagger=I$, $Z_N^\perp Z_N^\dagger=0$, this results in \rev{$\theta + \theta_1=X_N Z_N^\dagger=\hat\theta_N^\mathrm{OLS}$ and $\theta_0=X_NZ_N^{\perp \top}$}. 
    This leads to $W_N^\top = Z_N^\top (\hat \theta_N^\mathrm{OLS}- \theta)^\top + Z_N^{\perp\top} \theta_0^\top$, \rev{ which by plugging it into~\eqref{eq:DefThetaN_delta} yields} 
    \begin{equation*}
        [\star] \begin{bmatrix}
            -\frac1N I_{\rev{N}} & 0 \\ 0 & \epsilon(N, k_\delta) ^2 I_{n_x} 
        \end{bmatrix}
        \begin{bmatrix}
            Z_N^\top (\hat \theta_N^\mathrm{OLS}- \theta)^\top + Z_N^{\perp^\top} \theta_0^\top \\ I_{n_x}  
        \end{bmatrix}.
    \end{equation*}
    Elementary reformulations recover the result.
\end{proof}

\bibliographystyle{IEEEtran}
\bibliography{references.bib}

@InProceedings{simchowitz2018learning,
  author       = {Simchowitz, M. and Mania, H. and Tu, S. and Jordan, M. I. and Recht, B.},
  booktitle    = {Conf. On Learning Theory},
  title        = {Learning without mixing: Towards a sharp analysis of linear system identification},
  year         = {2018},
  pages        = {439--473},
}

@InProceedings{wagenmaker2020active,
  author       = {Wagenmaker, A. and Jamieson, K.},
  booktitle    = {Conf. on Learning Theory},
  title        = {Active learning for identification of linear dynamical systems},
  year         = {2020},
  organization = {PMLR},
  pages        = {3487--3582},
}

@InProceedings{sarkar2019near,
  author       = {Sarkar, T. and Rakhlin, A.},
  booktitle    = {International Conf. on Machine Learning},
  title        = {Near optimal finite time identification of arbitrary linear dynamical systems},
  year         = {2019},
  organization = {PMLR},
  pages        = {5610--5618},
}

@Article{dean2020sample,
  author    = {Dean, S. and Mania, H. and Matni, N. and Recht, B. and Tu, S.},
  journal   = {Foundations of Computational Mathematics},
  title     = {On the sample complexity of the linear quadratic regulator},
  year      = {2020},
  number    = {4},
  pages     = {633--679},
  volume    = {20},
  publisher = {Springer},
}

@Article{Tsiamis2023,
  author    = {Tsiamis, A. and Ziemann, I. and Matni, N. and Pappas, G. J.},
  journal   = {IEEE Control Systems},
  title     = {Statistical Learning Theory for Control: A Finite-Sample Perspective},
  year      = {2023},
  number    = {6},
  pages     = {67--97},
  volume    = {43},
}

@Book{ljung1999system,
  author    = {Ljung, L.},
  publisher = {Prentice Hall PTR},
  title     = {System Identification: Theory for the User},
  year      = {1999},
}

@Article{chatzikiriakos2024b,
      title={Sample Complexity Bounds for Linear System Identification from a Finite Set}, 
      author={Nicolas Chatzikiriakos and Andrea Iannelli},
      journal={IEEE Control Systems Letters}, 
      year={2024},
      volume={8},
      number={},
      pages={2751-2756},}

@article{mania2022active,
  title={Active learning for nonlinear system identification with guarantees},
  author={Mania, Horia and Jordan, Michael I and Recht, Benjamin},
  journal={J. of Machine Learning Research},
  volume={23},
  number={32},
  pages={1--30},
  year={2022}
}

@InProceedings{foster20a,
  title = 	 {Learning nonlinear dynamical systems from a single trajectory},
  author =       {Foster, Dylan and Sarkar, Tuhin and Rakhlin, Alexander},
  booktitle = 	 {Proceedings of the 2nd Conf. on Learning for Dynamics and Control},
  pages = 	 {851--861},
  year = 	 {2020},
  volume = 	 {120},
  publisher =    {PMLR},
}

@inproceedings{chatzikiriakos2025convex,
  title   = {Hidden Convexity in Active Learning: A Convexified Online Input Design for {ARX} Systems},
  author={Chatzikiriakos, Nicolas and Song, Bowen and Rank, Philipp and Iannelli, Andrea},
  booktitle={IEEE 64th Conf. on Decision and Control (CDC)},
  year={2025}
}

@inbook{vershynin2012, 
    place={Cambridge}, 
    title={Introduction to the non-asymptotic analysis of random matrices}, 
    booktitle={Compressed Sensing: Theory and Applications}, 
    publisher={Cambridge University Press}, 
    author={Vershynin, Roman},
    editor={Eldar, Yonina C. and Kutyniok, GittaEditors}, 
    year={2012}, 
    pages={210–268}
}

@article{waarde2023,
    author = {van Waarde, Henk J. and Camlibel, M. Kanat and Eising, Jaap and Trentelman, Harry L.},
    title = {Quadratic Matrix Inequalities with Applications to Data-Based Control},
    journal = {SIAM J. on Control and Optimization},
    volume = {61},
    number = {4},
    pages = {2251-2281},
    year = {2023},
    doi = {10.1137/22M1486807},
}

@article{umenberger2019,
  title={Robust exploration in linear quadratic reinforcement learning},
  author={Umenberger, Jack and Ferizbegovic, Mina and Sch{\"o}n, Thomas B and Hjalmarsson, H{\aa}kan},
  journal={Advances in Neural Information Processing Systems},
  volume={32},
  year={2019}
}

@ARTICLE{waarde2022,
  author={van Waarde, Henk J. and Camlibel, M. Kanat and Mesbahi, Mehran},
  journal={IEEE Transactions on Automatic Control}, 
  title={From Noisy Data to Feedback Controllers: Nonconservative Design via a Matrix S-Lemma}, 
  year={2022},
  volume={67},
  number={1},
  pages={162-175},
  doi={10.1109/TAC.2020.3047577}
}

@INPROCEEDINGS{berberich2020,
  author={Berberich, Julian and Koch, Anne and Scherer, Carsten W. and Allgöwer, Frank},
  booktitle={2020 American Control Conf. (ACC)}, 
  title={Robust data-driven state-feedback design}, 
  year={2020},
  volume={},
  number={},
  pages={1532-1538},
  doi={10.23919/ACC45564.2020.9147320}
}

@ARTICLE{dePersis2020,
  author={De Persis, Claudio and Tesi, Pietro},
  journal={IEEE Transactions on Automatic Control}, 
  title={Formulas for Data-Driven Control: Stabilization, Optimality, and Robustness}, 
  year={2020},
  volume={65},
  number={3},
  pages={909-924},
  doi={10.1109/TAC.2019.2959924}
}

@ARTICLE{braendle2025a,
  author={Brändle, Felix and Allgöwer, Frank},
  journal={IEEE Control Systems Letters}, 
  title={A System Parameterization for Direct Data-Driven Estimator Synthesis}, 
  year={2025},
  volume={9},
  number={},
  pages={1225-1230},
  doi={10.1109/LCSYS.2025.3580035}
}

@article{martin2023a,
title = {Gaussian inference for data-driven state-feedback design of nonlinear systems},
journal = {IFAC-PapersOnLine},
volume = {56},
number = {2},
pages = {4796-4803},
year = {2023},
issn = {2405-8963},
doi = {https://doi.org/10.1016/j.ifacol.2023.10.1245},
author = {Tim Martin and Thomas B. Schön and Frank Allgöwer},
}

@INPROCEEDINGS{braendle2025b,
  author={Brändle, Felix and Allgöwer, Frank},
  booktitle={2025 IEEE 64th Conf. on Decision and Control (CDC)}, 
  title={Data-driven Estimator Synthesis with Instantaneous Noise}, 
  year={2025},
  volume={},
  number={},
  pages={367-372},
  doi={10.1109/CDC57313.2025.11312215}
}

@Article{Weiland1994,
  author  = {Weiland, Siep and Scherer, Carsten},
  journal = {Lecture Notes, Dutch Institute for Systems and Control, Delft},
  title   = {Linear {Matrix} {Inequality} in {Control}},
  year    = {2000},
}

@inproceedings{Yingying2024,
  title={Learning the uncertainty sets of linear control systems via set membership: A non-asymptotic analysis},
  author={Li, Yingying and Yu, Jing and Conger, Lauren and Kargin, Taylan and Wierman, Adam},
  booktitle={41st International Conf. on Machine Learning},
  year={2024}
}

@article{chatzikiriakos2026a,
title = {End-to-end guarantees for indirect data-driven control of bilinear systems with finite stochastic data},
journal = {Automatica},
volume = {187},
pages = {112908},
year = {2026},
issn = {0005-1098},
doi = {https://doi.org/10.1016/j.automatica.2026.112908},
author = {Nicolas Chatzikiriakos and Robin Str{\"a}sser and Frank Allg{\"o}wer and Andrea Iannelli},

}

@article{MILANESE2004957,
title = {Set Membership identification of nonlinear systems},
journal = {Automatica},
volume = {40},
number = {6},
pages = {957-975},
year = {2004},
author = {Mario Milanese and Carlo Novara},
}

@article{mosek,
   author = {{MOSEK ApS}},
   title = {The MOSEK optimization toolbox for MATLAB manual. Version 9.0.},
   year = {2019},
 }

@article{care2017finite,
  title={Finite-sample system identification: An overview and a new correlation method},
  author={Care, Algo and Cs{\'a}ji, Bal{\'a}zs Cs and Campi, Marco C and Weyer, Erik},
  journal={IEEE Control Systems Letters},
  volume={2},
  number={1},
  pages={61--66},
  year={2017},
  publisher={IEEE}
}

\end{document}